\theoremstyle{definition}
\newtheorem{theorem}{Theorem}
\newtheorem{proposition}[theorem]{Proposition}
\newtheorem{corollary}[theorem]{Corollary}
\newtheorem{example}[theorem]{Example}
\newtheorem{remark}[theorem]{Remark}
\DeclareMathOperator{\bR}{\mathbb R}
\DeclareMathOperator{\bN}{\mathbb N}
\DeclareMathOperator{\Maths}{s}
\DeclareMathOperator{\MathAAE}{AAE}
\DeclareMathOperator{\sI}{\mathcal I}
\DeclareMathOperator{\sJ}{\mathcal J}
\DeclareMathOperator{\sK}{\mathcal K}
\newcommand{\sing}[1]{\Maths^\downarrow(#1)}
\newcommand{\AAE}[1]{\MathAAE^\downarrow(#1)}
\newcommand{\eigset}[2]{\lambda^\downarrow_{#2}(#1)}
\newcommand{\dg}{d_g}
\newcommand{\dpg}{d^\triangledown_g}
\begin{document}  
\title{Induced Metric And Matrix Inequalities On Unitary Matrices}

\author{H.\ F. Chau}
\email{Corresponding author, hfchau@hku.hk}
\affiliation{Department of Physics and Center of Theoretical and Computational
 Physics, University of Hong Kong, Pokfulam Road, Hong Kong}
\author{Chi-Kwong Li}
\email{ckli@math.wm.edu}
\affiliation{Department of Mathematics, College of William \& Mary,
 Williamsburg, VA 23187-8795, USA}
\altaffiliation{(in the spring of 2012) Department of Mathematics, University
 of Hong Kong, Pokfulam Road, Hong Kong}
\author{Yiu-Tung Poon}
\email{ytpoon@iastate.edu}
\affiliation{Department of Mathematics, Iowa State University, Ames, IA 50011,
 USA}
\author{Nung-Sing Sze}
\email{raymond.sze@inet.polyu.edu.hk}
\affiliation{Department of Applied Mathematics, The Hong Kong Polytechnic
 University, Hung Hom, Hong Kong}

\date{\today}

\begin{abstract}
 Recently, Chau [Quant.\ Inform.\ \& Comp.\ \textbf{11}, 721 (2011)] showed
 that one can define certain metrics and pseudo-metrics on $U(n)$, the group of
 all $n\times n$ unitary matrices, based on the arguments of the eigenvalues of
 the unitary matrices.  More importantly, these metrics and pseudo-metrics have
 quantum information theoretical meanings.  So it is instructive to study this
 kind of metrics and pseudo-metrics on $U(n)$.  Here we show that any symmetric
 norm on $\bR^n$ induces a metric on $U(n)$.  Furthermore, using the same
 technique, we prove an inequality concerning the eigenvalues of a product of
 two unitary matrices which generalizes a few inequalities obtained earlier by
 Chau [arXiv:1006.3614v1].
\end{abstract}

\pacs{02.10.Yn, 03.65.Aa, 03.67.Mn}

\maketitle 

\section{Introduction} \label{Sec:Intro}
 In quantum information science, it is instructive to measure the cost needed
 to evolve a quantum system~\cite{Lloyd} as well as to quantify the difference
 between two quantum evolutions on a system~\cite{Chefles}.
 To some extent, the solutions of both problems are closely related to certain
 pseudo-metric functions on unitary operators.  To see this, suppose we are
 given a certain quantifiable cost required to implement a unitary operation
 acting on an $n$-dimensional Hilbert space.  We may represent this cost by a
 non-negative function $f\colon U(n) \to \bR$, where $U(n)$ is the group of all
 $n\times n$ unitary matrices.  The larger the value of $f(X)$, the higher the
 cost of implementing the unitary operation $X$.  Besides, $f(X) = 0$ if it is
 costless to perform $X$.  We may further require this cost function $f$ to
 satisfy the following constraints.

\par\medskip
 \textbf{Constraints for the cost function $f$:}
 \begin{enumerate}
  \item $f(e^{i r} X) = f(X)$ for all $r\in \bR$ and $X\in U(n)$.  In addition,
   $f(I) = 0$.  The underlying reason is that changing the global phase of $X$
   has no effect on the quantum system.  Besides, the identity operation does
   not change any quantum state and hence should be costless.
  \item $f(X^{-1}) = f(X)$ for all $X\in U(n)$.  This is because $X^{-1}$ can
   be implemented by running the quantum circuit for $X$ backward in time with
   the same cost.
  \item $f(Y^{-1} X Y) = f(X)$ for all $X,Y\in U(n)$.  The rationale is that
   the cost to evolve a quantum system should be eigenbasis independent.
   Although this assumption is questionable for bipartite systems, we will
   stick to it in this paper for the evolution cost for monopartite system is
   already a worthy topics to investigate.
  \item $f(X Y) \le f(X) + f(Y)$ for all $X,Y\in U(n)$.  The reason behind is
   that a possible way to implement $X Y$ is to first apply $Y$ then follow by
   $X$.  If we further demand that the cost is additive (in the sense that the
   cost of applying $Y$ and then $X$ is equal to the cost of applying $Y$ plus
   the cost of applying $X$), which is not an unreasonable demand after all,
   then the inequality follows.
 \end{enumerate}
 A cost function $f$ induces a function $d\colon U(n) \times U(n) \to \bR$ by
 the equation $d(X,Y) = f(X Y^{-1})$ for all $X,Y\in U(n)$.  Surely, $d(X,Y)$
 can be regarded as the cost needed to transform $Y$ to $X$.  In this respect,
 the induced function $d$ provides a partial answer to the problem of
 quantifying the difference between two quantum evolutions on a system.  The
 larger the value of $d(X,Y)$, the more different the quantum operations $X$
 and $Y$ is.  More importantly, since $f$ obeys the above four constraints,
 $d(\cdot,\cdot)$ must be pseudo-metric on $U(n)$ because it satisfies $d(X,Y)
 \ge 0$, $d(X,Y) = d(Y,X)$ and $d(X,Z) \le d(X,Y) + d(Y,Z)$ for all $X,Y,Z\in
 U(n)$.  Nevertheless, $d(\cdot,\cdot)$ is not a metric for $d(X,Y) = 0$ does
 not imply $X = Y$.  We remark that the induced $d$ also obeys $d(Z X,Z Y) =
 d(X,Y)$ for all $X,Y,Z\in U(n)$.

 Conversely, suppose there is a pseudo-metric $d$ on $U(n)$ quantifying the
 difference between two unitary operations acting on a $n$-dimensional quantum
 system.  Surely, it should satisfy $d(X,X) = 0$, $d(e^{i r}X,Y) = d(X,Y)$ and
 $d(X,Y) = d(Z X,Z Y)$ for all $r\in \bR$, $X,Y,Z\in U(n)$.  The reason is that
 the difficulty in distinguishing between two unitary operations is unchanged
 by varying the global phase in one of the operations and by applying a common
 quantum operation to them.  (Again, this reason is valid as we restrict our
 study to monopartite systems.)  More importantly, $d$ induces the function
 $f(X) = d(X,I)$ on $U(n)$ which obeys the four constraints on $f$.  (The
 second constraint follows from $f(X^{-1}) = d(X^{-1},I) = d(X\, X^{-1},X) =
 d(I,X) = d(X,I) = f(X)$.  And the other three constraints can be proven in a
 similar way.)  To summarize, we have argued that the cost function $f$
 describing the resources required to evolve a (monopartite) quantum system is
 equivalent to quantifying the difference between two quantum evolutions on a
 (monopartite) system through the induced pseudo-metric function $d$.  And we
 remark on passing that our discussions so far are valid for
 infinite-dimensional quantum systems as well.

 Recently, Chau~\cite{Chau1,Chau2} introduced a family of cost functions on
 $U(n)$ based on a tight quantum speed limit lower bound on the evolution time
 of a quantum system he discovered earlier~\cite{Chau3}.  In quantum
 information science, these cost functions can be interpreted as the least
 amount of resources (measured in terms of the product of the evolution time
 and the average absolute deviation from the median of the energy) needed to
 perform a unitary operation $X\in U(n)$~\cite{Chau2}.  With the above quantum
 information science meaning in mind, it is not surprising that each cost
 function in this family depends only on the eigenvalues of its input argument
 $X$.  Actually, it can be written as a certain weighted sum of the absolute
 value of the argument of the eigenvalues of $X$~\cite{Chau1,Chau2}.

 By eigenvalue perturbation method, Chau~\cite{Chau1,Chau2} proved that for
 each cost function in the family, the corresponding induced function $d$ is
 indeed a pseudo-metrics on $U(n)$ (and hence the cost function really
 satisfies the four constraints listed earlier).  In fact, he proved something
 more.  In addition to this induced family of pseudo-metrics, he also
 discovered a family of closely related metrics on $U(n)$.  The only difference
 between them is that the family of metrics is an ``un-optimized'' version of
 the family of metrics in the sense that it does not take into account the fact
 that altering the global phase of a unitary operation does not affect the cost
 at all~\cite{Chau1,Chau2}.  More precisely, the underlying cost functions for
 the family of metrics obey the four constraints listed above except that the
 first one is replaced by $f(X) = 0$ if and only if $X = I$.  Note that given
 $X,Y\in U(n)$, the family of metrics can also be expressed as certain weighted
 sums of the absolute value of the argument of the eigenvalues of the matrix $X
 Y^{-1}$~\cite{Chau1,Chau2}.

 Interestingly, the family of metrics on $U(n)$ discovered by Chau provides
 another partial answer to the problem of quantifying the difference between
 two quantum evolutions on a system.  Specifically, Chau~\cite{Chau1,Chau2}
 showed that the metric functions he discovered can be used to give a
 quantitative measure on the degree of non-commutativity between two unitary
 matrices $X$ and $Y$ in terms of certain resources needed to transform $X Y$
 to $Y X$.

 The above background information shows that a number of quantum information
 science questions are related to the cost function $f$ (or equivalently, the
 pseudo-metric or its ``un-optimize'' metric version $d$).  Besides, the third
 constraint for $f$, namely, $f(Y^{-1} X Y) = f(X)$ for all $X,Y\in U(n)$,
 implies that the cost function $f$ depends on the eigenvalues of its input
 argument only.  Equivalently, it means that the corresponding metric and
 pseudo-metric $d(X,Y)$'s on $U(n)$ are functions of the eigenvalues of $X
 Y^{-1}$ only.

 In this paper, we adopt the following strategy to investigate the problem of
 metrics, pseudo-metrics and their relation with quantum information science.
 We begin by finding metrics and pseudo-metrics $d(X,Y)$'s on $U(n)$ that are
 functions of the eigenvalues of $X Y^{-1}$ only by means of
 Proposition~\ref{Prop:Main}.  More precisely, we prove that a symmetric norm
 of $\bR^n$ induces a metric and a pseudo-metric on $U(n)$ of the required
 type.  We then show in Example~\ref{Eg:l_p_norm} that some of the new metrics
 and pseudo-metrics discovered in this way indeed have quantum information
 science meanings.  Interestingly, Proposition~\ref{Prop:Main} has merit on its
 own for we can adapt its proof to show an inequality concerning the
 eigenvalues of a product of two unitary matrices.  This inequality is a
 generalization of several inequalities first proven in Ref.~\cite{Chau1} using
 eigenvalue perturbation technique.  Finally, we briefly discuss the
 connection of our findings and the Horn's problem on eigenvalue inequalities
 for the sum of Hermitian matrices.

\section{Metric And Pseudo-Metric Induced By A Symmetric Norm}
\label{Sec:Metric_Proof}

 To show that a symmetric norm on $\bR^n$ induces a metric and a pseudo-metric
 on $U(n)$, we make use of the following result by Thompson~\cite{T}:
\begin{theorem}[Thompson]
 If $A$ and $B$ are Hermitian matrices, there exist unitary matrices $X$ and
 $Y$ (depending on $A$ and $B$) such that
 \begin{equation}
  \exp \left( i A \right) \exp \left( i B \right) = \exp \left( i X A X^{-1} +
  i Y B Y^{-1} \right) . \label{E:exponential_formula}
 \end{equation}
 \label{Thrm:Thompson}
\end{theorem}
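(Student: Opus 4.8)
The plan is to convert the matrix identity \eqref{E:exponential_formula} into a question about the spectrum of a single Hermitian logarithm of the product, and then to solve that spectral problem.

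\emph{Reduction to a logarithm with prescribed spectrum.} The product $W=\exp(iA)\exp(iB)$ is unitary, so it admits a spectral decomposition $W=\sum_j e^{i\theta_j}P_j$ with $\theta_j\in\bR$ and $P_j$ the spectral projections. Every choice of ``branch'' integers $k_j\in\mathbb{Z}$ yields a Hermitian logarithm $C=\sum_j(\theta_j+2\pi k_j)P_j$ of $W$, i.e.\ $\exp(iC)=W$. I claim it then suffices to arrange that the eigenvalue list of some such $C$ coincides with the spectrum of \emph{some} matrix of the form $XAX^{-1}+YBY^{-1}$. Indeed, if a Hermitian $C'=X_0AX_0^{-1}+Y_0BY_0^{-1}$ has the same spectrum as $C$, then $C=ZC'Z^{-1}$ for a unitary $Z$, whence $C=(ZX_0)A(ZX_0)^{-1}+(ZY_0)B(ZY_0)^{-1}$ is again of the desired form while still satisfying $\exp(iC)=W$; setting $X=ZX_0$ and $Y=ZY_0$ then gives \eqref{E:exponential_formula}.

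\emph{Reduction to Horn's problem.} Let $\alpha$ and $\beta$ denote the eigenvalues of $A$ and $B$. By the solution of the problem of the eigenvalues of a sum of two Hermitian matrices, the spectra attainable by $XAX^{-1}+YBY^{-1}$ over all $X,Y\in U(n)$ fill out exactly the convex polytope $\mathcal H(\alpha,\beta)$ of decreasing tuples $\gamma$ cut out by the trace identity $\sum_j\gamma_j=\sum_j\alpha_j+\sum_j\beta_j$ together with the Horn inequalities. Thus the theorem reduces to the purely numerical claim that the branch integers $k_j$ can be chosen so that the tuple $(\theta_j+2\pi k_j)_j$, reordered decreasingly, lies in $\mathcal H(\alpha,\beta)$. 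The trace facet is automatically reachable: since $\det W=e^{i\Tr A}e^{i\Tr B}$ we have $\sum_j\theta_j\equiv\sum_j\alpha_j+\sum_j\beta_j\pmod{2\pi}$, so a single overall adjustment of $\sum_j k_j$ forces the exact trace equality.

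\emph{The main obstacle.} What remains, and what I expect to be the crux, is to produce one integer-shift assignment satisfying \emph{all} of the Horn inequalities simultaneously. This is precisely where the multiplicative nature of $W$ must be exploited: the $\theta_j$ are not arbitrary but are the arguments of the eigenvalues of a product of two unitaries whose conjugacy classes are fixed by $\alpha$ and $\beta$, so they obey their own multiplicative Horn-type inequalities, and one must transport these, through the branch choice, into the additive inequalities defining $\mathcal H(\alpha,\beta)$. Since the two families of inequalities do not correspond term by term, I would attempt the matching either by a deformation argument, scaling $A\mapsto sA$ and $B\mapsto sB$ and continuing the trivial solution at $s=0$ while tracking eigenvalue crossings and the instants at which an argument wraps past $\pi$, or by an induction on $n$ using a common invariant subspace. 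In either route, establishing this compatibility between the additive and multiplicative eigenvalue polytopes is the genuine content of the statement.
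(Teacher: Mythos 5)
The paper does not prove this theorem at all: it is quoted from Thompson's work and used as a black box, with the remark that Thompson's own argument presupposed Horn's conjecture (later confirmed by Klyachko and by Knutson and Tao). Measured against Thompson's actual strategy, your reduction is faithful and correct: pass to a Hermitian logarithm $C$ of the unitary product, observe that only the spectrum of $C$ matters because Hermitian matrices with equal spectra are unitarily conjugate, and note that the attainable spectra of $XAX^{-1}+YBY^{-1}$ form exactly the Horn polytope $\mathcal{H}(\alpha,\beta)$. The handling of the determinant, forcing the exact trace equality by one overall adjustment of $\sum_j k_j$, is also fine.

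However, the proposal has a genuine, self-acknowledged gap: you never produce branch integers $k_j$ for which the tuple $(\theta_j+2\pi k_j)_j$ satisfies \emph{all} of the Horn inequalities, and this is not a routine detail --- it is the entire content of the theorem. Neither of the strategies you sketch is carried out, and each faces a real obstruction. In the deformation $s\mapsto(sA,sB)$ the target polytope $\mathcal{H}(s\alpha,s\beta)$ moves with $s$ while the arguments $\theta_j(s)$ can wrap past $\pi$; at a wrapping instant the candidate spectrum jumps by $2\pi$ in one coordinate, and showing that such a jump never ejects the tuple from the moving polytope requires precisely the quantitative comparison between the multiplicative and additive inequality systems that you defer. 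The induction on $n$ lacks an obvious starting point, since $A$ and $B$ need not share an invariant subspace. So what you have is a correct reduction plus an honest statement of the open core, not a proof; completing it means reproducing the hard part of Thompson's argument (or invoking the Agnihotri--Woodward and Belkale description of the multiplicative eigenvalue polytope and verifying that a suitable branch choice maps it into $\mathcal{H}(\alpha,\beta)$), which is exactly the nontrivial compatibility the paper sidesteps by citation.
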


Note that Thompson proved his result by assuming the validity
of the Horn's conjecture concerning the relation of the
eigenvalues of the Hermitian matrices $A$, $B$, and $C = A+B$.
The Horn's conjecture was confirmed based on the works of
Klyachko~\cite{Klyachko} and Knutson and Tao~\cite{KT};
see Ref.~\cite{Fu} for an excellent survey of the results.  Later,
Agnihotri and Woodward~\cite{AW} improved the result of Thompson
and gave a necessary and sufficient condition for the eigenvalues
of (special) unitary matrices $X$, $Y$ and $Z = X Y$ using
quantum Schubert calculus.  The proof is technical and the statement
of the result involve a large set of inequalities on the arguments of
the eigenvalues of the unitary matrices  $X, Y$ and $Z=XY$
by putting them in suitable interval $[r, r+2\pi)$.
So, it is not easy to
use.  In fact, it suffices (and is actually more practical)
to use Theorem~\ref{Thrm:Thompson} to derive our results.
We will further discuss the connection between our results with the Horn's
problem in Section~\ref{Sec:Horn_Problem}.
We first present our results in the following.

Recall that a symmetric norm $g\colon \bR^n\rightarrow [0,\infty)$ is a norm
function such that $g({\bf v}) = g({\bf v}P)$ for any ${\bf v} \in \bR^{1\times
n}$, and any permutation matrix or diagonal orthogonal matrix $P$.

\begin{proposition}
 Let $g: \bR^n\rightarrow [0,\infty)$ be a symmetric norm.  We may define a
 metric on $U(n)$ as follows:
 \begin{equation}
  \dg(X,Y) = g(|a_1|, \dots, |a_n|) ,
  \label{E:d_defined}
 \end{equation}
 where $XY^{-1}$ has eigenvalues $e^{ia_j}$'s with 
 $\pi \ge a_1 \ge \cdots \ge a_n > -\pi$.  Furthermore, we may define a
 pseudo-metric on $U(n)$ by
 \begin{equation}
  \dpg(X,Y) = \inf_{r \in \bR} g(|a_1(r)|, \dots, |a_n(r)|) ,
  \label{E:d_pseudo_defined}
 \end{equation}
 where $e^{ir}XY^{-1}$ has eigenvalues $e^{ia_j(r)}$'s with 
 $\pi \ge a_1(r) \ge \cdots \ge a_n(r) > -\pi$.

 Note that the infimum above is actually a minimum as we can search 
 the infimum in any compact interval of the form $[r_0,r_0+2\pi]$.
 \label{Prop:Main}
\end{proposition}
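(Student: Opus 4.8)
The plan is to check the metric axioms for $\dg$ directly and then to bootstrap the pseudo-metric axioms for $\dpg$ from them. The structural fact I will lean on is that the symmetric norm $g$ is a symmetric gauge function, so by von Neumann's theorem $\|M\|_g := g(s_1(M),\dots,s_n(M))$ (with $s_j$ the singular values of $M$) is a genuine unitarily invariant norm on $n\times n$ matrices, hence subadditive; I will also use that symmetric gauge functions are monotone, i.e.\ $g(\mathbf{x})\le g(\mathbf{y})$ whenever $|x_j|\le|y_j|$ for every $j$. For brevity, given a unitary $W$ with eigenvalues $e^{i\theta_j}$ and $\theta_j\in(-\pi,\pi]$, write $\Phi(W)=g(|\theta_1|,\dots,|\theta_n|)$, so that $\dg(X,Y)=\Phi(XY^{-1})$ and $\dpg(X,Y)=\inf_{r}\Phi(e^{ir}XY^{-1})$.

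The first axioms are easy. Since $g$ is a norm, $\dg\ge 0$, and $\dg(X,Y)=0$ forces every $\theta_j=0$, i.e.\ every eigenvalue of $XY^{-1}$ equals $1$; being unitary, $XY^{-1}$ is diagonalizable, so $XY^{-1}=I$ and $X=Y$. For symmetry I first record that $\Phi(W^{-1})=\Phi(W)$: the eigenvalues of $W^{-1}$ are the conjugates $e^{-i\theta_j}$, whose arguments in $(-\pi,\pi]$ are obtained by negating the $\theta_j$ (with $-\pi$ replaced by $\pi$), leaving the multiset of absolute values fixed, so the permutation- and sign-invariance of $g$ gives the claim. Then $\dg(Y,X)=\Phi(YX^{-1})=\Phi\bigl((XY^{-1})^{-1}\bigr)=\Phi(XY^{-1})=\dg(X,Y)$.

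The triangle inequality is the heart of the argument and reduces to the claim that $\Phi(UV)\le\Phi(U)+\Phi(V)$ for all unitaries $U,V$; applying this to $U=XY^{-1}$, $V=YZ^{-1}$ (so $UV=XZ^{-1}$) then yields $\dg(X,Z)\le\dg(X,Y)+\dg(Y,Z)$. To prove the claim, write $U=e^{iA}$ and $V=e^{iB}$ with $A,B$ Hermitian whose eigenvalues $a_j,b_j\in(-\pi,\pi]$ are the arguments of $U,V$; thus $\Phi(U)=g(|a_1|,\dots,|a_n|)$, $\Phi(V)=g(|b_1|,\dots,|b_n|)$, and $|a_j|,|b_j|$ are the singular values of $A,B$. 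By Theorem~\ref{Thrm:Thompson} there are unitaries $P,Q$ with $UV=\exp\bigl(i\,PAP^{-1}+i\,QBQ^{-1}\bigr)=e^{iC}$, where $C=PAP^{-1}+QBQ^{-1}$ is Hermitian with eigenvalues $\mu_j$. The eigenvalues of $UV$ are then $e^{i\mu_j}$, and reducing each $\mu_j$ into $(-\pi,\pi]$ produces an argument of absolute value at most $|\mu_j|$, because the representative of $\mu_j+2\pi\mathbb{Z}$ lying in $(-\pi,\pi]$ has least absolute value in its class. Combining permutation-invariance, monotonicity, and subadditivity of $\|\cdot\|_g$, and noting that $|\mu_j|,|a_j|,|b_j|$ are precisely the singular values of $C,PAP^{-1},QBQ^{-1}$,
\[
 \Phi(UV)\le g(|\mu_1|,\dots,|\mu_n|)=\|C\|_g\le\|PAP^{-1}\|_g+\|QBQ^{-1}\|_g=\Phi(U)+\Phi(V).
\]
I expect this to be the main obstacle: Thompson's theorem is exactly the device that converts the multiplicative relation $UV=XZ^{-1}$ into an additive one to which norm subadditivity applies, while the elementary bound on the reduced arguments is what absorbs the wrap-around caused by confining arguments to $(-\pi,\pi]$.

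Finally I turn to $\dpg$. Non-negativity is clear, and symmetry follows from $\Phi(W^{-1})=\Phi(W)$ together with the identity $e^{ir}YX^{-1}=(e^{-ir}XY^{-1})^{-1}$, which gives $\dpg(Y,X)=\inf_r\Phi(e^{-ir}XY^{-1})=\dpg(X,Y)$. For the triangle inequality, fix $s,t\in\bR$ and apply $\Phi(UV)\le\Phi(U)+\Phi(V)$ to $U=e^{is}XY^{-1}$ and $V=e^{it}YZ^{-1}$, whose product is $e^{i(s+t)}XZ^{-1}$; since $\dpg(X,Z)=\inf_u\Phi(e^{iu}XZ^{-1})\le\Phi(e^{i(s+t)}XZ^{-1})$, taking the infimum over $s$ and then over $t$ yields $\dpg(X,Z)\le\dpg(X,Y)+\dpg(Y,Z)$. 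It remains to justify that the infimum is attained: the map $r\mapsto\Phi(e^{ir}XY^{-1})$ is $2\pi$-periodic and continuous, the only potential jump being when an eigenvalue crosses $-\pi\leftrightarrow\pi$, which is harmless because $|{-\pi}|=|\pi|$; hence on the compact interval $[r_0,r_0+2\pi]$ it attains its minimum, so the infimum is in fact a minimum.
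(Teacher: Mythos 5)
Your proof is correct and follows essentially the same route as the paper: both hinge on Thompson's theorem to turn the product $XZ^{-1}=(XY^{-1})(YZ^{-1})$ into a sum of conjugated Hermitian matrices, absorb the mod-$2\pi$ reduction of the arguments by a monotonicity/majorization step, and then apply standard symmetric-gauge-function machinery. The only cosmetic difference is in the packaging: you invoke von Neumann's theorem (subadditivity of the induced unitarily invariant norm $\|\cdot\|_g$ together with monotonicity of $g$), whereas the paper unpacks the same fact via Ky Fan $k$-norms and the dominance principle; your entrywise bound $|c_j|\le|\mu_j|$ also cleanly justifies the weak-majorization step that the paper asserts without detail, and your derivation of the $\dpg$ triangle inequality by taking infima directly sidesteps the paper's (harmless) reliance on attainment of the infimum.
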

\medskip
\begin{proof}
 Surely $\dg(X,Y),$ $\dpg(X,Y) \ge 0$ for all $X,Y\in U(n)$.  Besides,
 $\dg(X,X) = g(0,0,\dots ,0) = 0$.  And
 if $X\ne Y$, at least one eigenvalue of $XY^{-1}$ must be different from
 $1$.  Since $g$ is a norm, we conclude that $\dg(X,Y) > 0$.

 Suppose $XY^{-1}$ has eigenvalues $e^{ia_j}$'s with $\pi \ge a_1 \ge \cdots
 \ge a_n > -\pi$.  Clearly, the eigenvalues of $YX^{-1}$ are $e^{-ia_j}$'s.
 As $g$ is a symmetric norm,  $g(|a_1|, \dots, |a_n|) = g(|-a_n|, \dots,
 |-a_1|)$.  Hence, $\dg(X,Y) = \dg(Y,X)$.  By applying the same argument to
 $e^{ir} XY^{-1}$, we get $\dg(e^{ir}X,Y) = \dg(Y,e^{ir}X)$ for all $r\in
 \bR$.  From Eqs.~\eqref{E:d_defined} and~\eqref{E:d_pseudo_defined}, we know
 that $\dpg(X,Y) = \inf_{r\in \bR} \dg(e^{ir}X,Y) = \inf_{r\in \bR}
 \dg(X,e^{-ir}Y)$.  Hence, $\dpg(X,Y) = \dpg(Y,X)$.

 Finally, we verify the triangle inequalities for $\dg(\cdot,\cdot)$ and
 $\dpg(\cdot,\cdot)$.
 Let $X,Y,Z \in U(n)$. Suppose 
 $\dg(X,Y) = g(|a_1|,\dots, |a_n|)$ and 
 $\dg(Y,Z) = g(|b_1|,\dots, |b_n|)$
 where $e^{ia_1}, \dots, e^{ia_n}$ are the eigenvalues of $XY^{-1}$, and 
 $e^{ib_1}, \dots, e^{ib_n}$ are the eigenvalues of $YZ^{-1}$. Suppose
 $XZ^{-1}$ has eigenvalues $e^{ic_j}$'s with
 $\pi \ge c_1 \ge \cdots \ge c_n > -\pi$.
 Then by Theorem~\ref{Thrm:Thompson},
 there exist Hermitian matrices $A, B, C = A+B$
 with eigenvalues $a_1 \ge \cdots \ge a_n$, 
 $b_1 \ge \cdots \ge b_n$ and $\tilde c_1 \ge \cdots \ge \tilde c_n$
 such that if we replace $\tilde c_j$ by $\tilde c_j - 2\pi$ if 
 $\tilde c_j > \pi$ and replace $\tilde c_j$ by
 $\tilde c_j + 2\pi$ if $\tilde c_j \le -\pi$,
 then the resulting $n$ entries will be the same as $c_1, \dots, c_n$
 if they are arranged in descending order.
 Consequently, if 
 $\|{\bf v}\|_k$ is the sum of the $k$ largest entries of
 ${\bf v}\in \bR^{1\times n}$ for $k = 1, \dots, n$, then  
 \begin{align}
  \|(|c_1|, \dots, |c_n|)\|_k & \le \| (|\tilde c_1|, \dots,
  |\tilde c_n|)\|_k \nonumber \\
  & \le \| (|a_1|, \dots, |a_n|)\|_k +  \| (|b_1|, \dots, |b_n|)\|_k \nonumber
   \\
  & = \| (|a_1|+|b_1|,\dots ,|a_n|+|b_n|) \|_k .
  \label{E:tmp1}
 \end{align}
 Note that to arrive at the second inequality above, we have used the fact
 that
 \begin{equation}
  \|M+N\|_k \le \|M\|_k + \|N\|_k
  \label{E:Ky_Fan_inequality}
 \end{equation}
 for any $n\times n$ complex-valued matrices $M, N$ and for $k = 1, \dots, n$.
 Here $\|M\|_k$ is the Ky Fan $k$-norm, which is defined as the sum of 
 the $k$ largest singular values of $M$~\cite{F}.

 Since $g({\bf u}) \le g({\bf v})$ for any ${\bf u}, {\bf v} \in
 \bR^{1\times n}$ if and only if
 $\|{\bf u}\|_k \le \|{\bf v}\|_k$ for $k = 1, \dots, n$~\cite{FH,LT},
 it follows that 
 \begin{align}
  \dg(X,Z) & \le g(|c_1|, \dots, |c_n|) \nonumber \\
  & \le g(|a_1|+|b_1|,\dots, |a_n|+|b_n|) \nonumber \\
  & \le g(|a_1|, \dots, |a_n|) + g(|b_1|, \dots, |b_n|) \nonumber \\
  & = \dg(X,Y)+\dg(Y,Z) .
  \label{E:tmp2}
 \end{align}

 Since the infimum in Eq.~\eqref{E:d_pseudo_defined} is actually a minimum,
 there exist $r(X,Y), s(Y,Z)\in \bR$ such that $\dpg(X,Y) = \dg(e^{ir}X,Y)$
 and $\dpg(Y,Z) = \dg(e^{is}Y,Z) = \dg(Y,e^{-is}Z)$.  From Eq.~\eqref{E:tmp2},
 \begin{align}
  \dpg(X,Y) + \dpg(Y,Z) &= \dg(e^{ir}X,Y) + \dg(Y,e^{-is}Z) \nonumber \\
  &\ge \dg(e^{ir}X,e^{-is}Z) \nonumber \\
  &= \dg(e^{i(r+s)}X,Z) \nonumber \\
  &\ge \dpg(X,Z) .
  \label{E:tmp3}
 \end{align}
 The proof is complete.
\end{proof}
      
\medskip\noindent
\begin{example}
 For any ${\mathbf\mu} = (\mu_1, \dots, \mu_n) \in \bR^n$,
 define the ${\mathbf \mu}$-norm by
 \begin{equation}
  \|{\bf v}\|_{\mathbf\mu} =  \max\left\{\sum_{j=1}^n |\mu_j v_{i_j}| \colon
  \{i_1,\dots,i_n\} = \{1, \dots, n\}\right\} . \label{E:nu_norm_defined}
 \end{equation}
 Clearly this is a family of symmetric norms; and the induced metrics and
 pseudo-metric on $U(n)$ are the families of metrics and pseudo-metrics
 introduced by Chau in Refs.~\cite{Chau1,Chau2}.
 \label{Eg:mu_norm}
\end{example}

\begin{example}
 One may pick $g$ to be the $\ell_p$ norm defined by
 $\ell_p({\bf v}) = \left(\sum_{j=1}^n |v_j|^p \right)^{1/p}$ for any $p \in
 [1, \infty]$.  The induced metric on $U(n)$ has some interesting quantum
 information science meanings.  In fact, it will be shown in Ref.~\cite{LC}
 that this induced metric is a new family of indicator functions on the minimum
 resources needed to perform a unitary transformation.  Moreover, these
 indicator functions are closely related to a new set of quantum speed limit
 bounds on time-independent Hamiltonians~\cite{LC} generalizing the earlier
 results by Chau~\cite{Chau1,Chau2,Chau3}.
 \label{Eg:l_p_norm}
\end{example}

\begin{remark}
 In the perturbation theory context, we consider
 $\tilde{X} = X E$, where $E$ is very close to the identity.
 Suppose $X = e^{iA}$, where $A$ has eigenvalues
 $\pi-\varepsilon > a_1 \ge \cdots \ge a_n > -\pi + \varepsilon$, 
 and  $E = e^{iB}$ such that the eigenvalues of $B$ lie in 
 $[-\varepsilon, \varepsilon]$
 for an $\varepsilon > 0$. 
 Then we may conclude that $\tilde{X}$ has eigenvalues 
 $\pi > c_1 \ge \cdots \ge c_n > -\pi$ such that
 $|c_j - a_j| \le \varepsilon$.
 \label{Rem:perturbation_context}
\end{remark}

\section{Several Inequalities On Products Of Two Unitary Matrices}
\label{Sec:Matrix_Product_Inequalities}

 The proof technique used in Proposition~\ref{Prop:Main} can be used to show
 an inequality generalizing a few similar ones originally reported by Chau in
 Ref.~\cite{Chau1}.

 First, recall that given two non-increasing sequences of real numbers
 ${\bf u} = (u_1,\dots,u_n)$ and ${\bf u}' = (u_1',\dots,u_n')$, we say that
 ${\bf u}$ is weakly sub-majorized by ${\bf u}'$ if $\sum_{j=1}^k u_j \le
 \sum_{j=1}^k u_j'$ for $1\le k\le n$.  Furthermore, a real-valued function
 $h({\bf u})$ is said to be Schur-convex if $h({\bf u}) \le h({\bf u}')$
 whenever ${\bf u}$ is weakly sub-majorized by ${\bf u}'$.

\begin{proposition}
 Let
 \begin{equation}
  h(\sing{A + B},\sing{A},\sing{B}) \le 0 \label{E:Hermitian_base_eq}
 \end{equation}
 be an inequality valid for all $n$-dimensional Hermitian matrices $A$ and
 $B$, where $\sing{A}$ denotes the sequence of singular values of $A$
 arranged in descending order.  Suppose further that $h$ is a Schur-convex
 function of its first argument whenever the second and third arguments are
 kept fixed.  Then,
 \begin{equation}
  h(\AAE{X Y},\AAE{X},\AAE{Y}) \le 0 \label{E:unitary_resultant_eq}
 \end{equation}
 where $\AAE{X}$ denotes the sequence of absolute value of the principal value
 of argument of the eigenvalues of an $n\times n$ unitary matrix $X$ arranged
 in descending order.
 In other words, if the eigenvalues of the unitary matrix $X$ are $e^{i a_1},
 \dots e^{i a_n}$ with $a_j \in (-\pi,\pi]$ for all $j$ and $|a_1| \ge |a_2|
 \ge \dots \ge |a_n|$, then $\AAE{X} = (|a_1|,|a_2|,\dots,|a_n|)$.
 \label{Prop:additional}
\end{proposition}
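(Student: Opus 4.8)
The plan is to lift the given Hermitian-matrix inequality~\eqref{E:Hermitian_base_eq} to the unitary setting by taking Hermitian logarithms and then invoking Thompson's theorem, exactly mirroring the triangle-inequality argument already carried out in the proof of Proposition~\ref{Prop:Main}. First I would write $X = \exp(iA')$ and $Y = \exp(iB')$, where $A'$ and $B'$ are the Hermitian matrices whose eigenvalues are the principal values of the arguments of the eigenvalues of $X$ and of $Y$, so that every eigenvalue of $A'$ and of $B'$ lies in $(-\pi,\pi]$. Because the singular values of a Hermitian matrix are the absolute values of its eigenvalues, this choice gives $\sing{A'} = \AAE{X}$ and $\sing{B'} = \AAE{Y}$ as descending sequences.

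Next I would apply Theorem~\ref{Thrm:Thompson} to $A'$ and $B'$: there exist unitaries $U,V$ with $XY = \exp(iA')\exp(iB') = \exp(iUA'U^{-1} + iVB'V^{-1})$. Setting $A = UA'U^{-1}$, $B = VB'V^{-1}$ and $C = A+B$, the matrices $A$ and $B$ are Hermitian with the same singular values as $A'$ and $B'$, so $\sing{A} = \AAE{X}$ and $\sing{B} = \AAE{Y}$, while $\exp(iC) = XY$. The hypothesis~\eqref{E:Hermitian_base_eq} then yields $h(\sing{C}, \AAE{X}, \AAE{Y}) \le 0$ at once.

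It remains to compare $\sing{C}$ with $\AAE{XY}$. Writing the eigenvalues of $C$ as $\tilde c_1 \ge \cdots \ge \tilde c_n$, the eigenvalues of $XY = \exp(iC)$ are $e^{i\tilde c_j}$, and their principal arguments $c_j \in (-\pi,\pi]$ differ from $\tilde c_j$ by an integer multiple of $2\pi$; since $A,B$ have eigenvalues in $(-\pi,\pi]$ the eigenvalues of $C$ lie in $(-2\pi,2\pi)$, so this correction is $0$ or $\pm 2\pi$ and can only reduce the absolute value, giving $|c_j| \le |\tilde c_j|$ after matching. Repeating the Ky Fan estimate of~\eqref{E:tmp1}, namely that $\|(|c_1|,\dots,|c_n|)\|_k \le \|(|\tilde c_1|,\dots,|\tilde c_n|)\|_k$ for every $k$, shows precisely that $\AAE{XY}$ is weakly sub-majorized by $\sing{C}$. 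Since $h$ is Schur-convex in its first argument with the other two held fixed, I would conclude $h(\AAE{XY}, \AAE{X}, \AAE{Y}) \le h(\sing{C}, \AAE{X}, \AAE{Y}) \le 0$, which is~\eqref{E:unitary_resultant_eq}.

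The main obstacle is this last comparison: one must be sure that passing from the genuine eigenvalues of $C$ to the principal arguments of $\exp(iC)$ can only shrink the sorted absolute-value vector in the weak-submajorization order, and that this shrinkage lines up with the direction in which the Schur-convex $h$ is monotone. Both points are exactly the wrap-around estimate already secured in~\eqref{E:tmp1}, so the genuinely new ingredient is merely replacing the symmetric norm $g$ by an arbitrary Schur-convex $h$; no machinery beyond Thompson's theorem and Ky Fan's inequality~\eqref{E:Ky_Fan_inequality} is needed.
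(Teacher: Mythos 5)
Your proposal is correct and follows essentially the same route as the paper's own proof: write $X$ and $Y$ as exponentials of Hermitian matrices with eigenvalues in $(-\pi,\pi]$, invoke Theorem~\ref{Thrm:Thompson} to get $XY=\exp(i\tilde C)$ with $\tilde C$ a sum of conjugates of those logarithms, apply the Hermitian inequality to $\tilde C$, and then use the $\pm2\pi$ wrap-around together with weak sub-majorization and Schur-convexity to pass from $\sing{\tilde C}$ to $\AAE{XY}$. The only differences are notational.
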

\begin{proof}
 Let $X, Y \in U(n)$.  And write $X = \exp ( i A )$,
 $Y = \exp (i B)$ and $X Y = \exp (i C)$ where the eigenvalues of the
 Hermitian matrices $A, B, C$ are all in the range $(-\pi,\pi]$.  By
 Theorem~\ref{Thrm:Thompson}, we can find a Hermitian matrix $\tilde{C}$ and $X
 Y = \exp ( i \tilde{C} )$, where $\tilde{C} = W_1 A W_1^{-1} + W_2 B W_2^{-1}$
 for some $W_1, W_2 \in U(n)$.  Hence,
 $h(\sing{\tilde{C}},\sing{A},\sing{B}) \leq 0$.

 Note that the eigenvalues of $\tilde{C}$ need not lie on the interval
 $(-\pi,\pi]$.  Yet, we can transform $\tilde{C}$ to $C$ by replacing those
 eigenvalues $a_j$'s of $\tilde{C}$ by $a_j + 2\pi$ if $a_j \leq -\pi$ and
 replacing them by $a_j - 2\pi$ if $a_j > \pi$.  Obviously, $\sing{C}$ is
 weakly sub-majorized by $\sing{\tilde{C}}$.  Therefore,
 \begin{align}
  & h(\AAE{X Y},\AAE{X},\AAE{Y}) \nonumber \\
  = & h(\sing{C},\sing{A},\sing{B})
   \le h(\sing{\tilde{C}},\sing{A},\sing{B}) \le 0 .
  \label{E:proof_2_tmp}
 \end{align}
 So, we are done.
\end{proof}

\begin{corollary}
 Let $X, Y \in U(n)$ and that $X$, $Y$ and $X Y$ have eigenvalues
 $e^{i a_j}$'s, $e^{i b_j}$'s and $e^{i c_j}$'s, respectively with $\pi \ge
 |a_1| \ge \cdots \ge |a_n| \ge 0$, $\pi \ge |b_1| \ge \cdots \ge |b_n| \ge 0$
 and $\pi \ge |c_1| \ge \cdots \ge |c_n| \ge 0$.  Then
 \begin{equation}
  \sum_{\ell = 1}^p |c_{j_\ell+k_\ell-\ell}| \le \sum_{\ell=1}^p \left(
  |a_{j_\ell}| + |b_{k_\ell}| \right) , \label{E:Lidskii1}
 \end{equation}
 for any $1\le j_1 < \cdots < j_p \le n$ and $1\le k_1 < \cdots < k_p \le n$
 with $j_p + k_p - p \leq n$.
 \label{Cor:additional}
\end{corollary}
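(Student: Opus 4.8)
The plan is to run the same argument as in the proof of Proposition~\ref{Prop:additional}, the only new ingredient being a classical Lidskii-type inequality for singular values in place of the Ky~Fan norm bound. First I would write $X = \exp(iA)$ and $Y = \exp(iB)$, where the Hermitian matrices $A$ and $B$ have all their eigenvalues in $(-\pi,\pi]$; these eigenvalues are exactly the principal values of the arguments of the eigenvalues of $X$ and $Y$, so that the $m$-th largest singular value of $A$ equals $|a_m|$ and that of $B$ equals $|b_m|$. By Theorem~\ref{Thrm:Thompson} there exist unitary $W_1,W_2$ with $XY = \exp(i\tilde C)$, where $\tilde C = W_1 A W_1^{-1} + W_2 B W_2^{-1}$ is Hermitian.

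Next I would apply to the splitting $\tilde C = (W_1 A W_1^{-1}) + (W_2 B W_2^{-1})$ the singular-value Lidskii--Wielandt (Thompson--Freede) inequality: for all $n\times n$ matrices $M,N$ and indices $1\le j_1<\cdots<j_p\le n$, $1\le k_1<\cdots<k_p\le n$ with $j_p+k_p-p\le n$, one has $\sum_{\ell=1}^p s_{j_\ell+k_\ell-\ell}(M+N) \le \sum_{\ell=1}^p s_{j_\ell}(M) + \sum_{\ell=1}^p s_{k_\ell}(N)$, where $s_m(\cdot)$ denotes the $m$-th largest singular value. Writing $m_\ell = j_\ell + k_\ell - \ell$ and using that conjugation by a unitary preserves singular values, this yields
\[ \sum_{\ell=1}^p s_{m_\ell}(\tilde C) \le \sum_{\ell=1}^p |a_{j_\ell}| + \sum_{\ell=1}^p |b_{k_\ell}| ; \]
the hypothesis $j_p+k_p-p\le n$ guarantees $1\le m_1<\cdots<m_p\le n$, so every index appearing is legitimate.

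It then remains to pass from $\tilde C$ back to the principal branch. Let $C$ be the Hermitian matrix sharing an eigenbasis with $\tilde C$, having eigenvalues in $(-\pi,\pi]$ and satisfying $XY=\exp(iC)$, so that the $m$-th largest singular value of $C$ is $|c_m|$. One obtains $C$ from $\tilde C$ by translating each eigenvalue lying outside $(-\pi,\pi]$ by $\pm 2\pi$. The crucial geometric point, which I expect to be the main obstacle, is that this translation never increases a modulus: since $\|A\|,\|B\|\le\pi$ we have $\|\tilde C\|\le 2\pi$, so every eigenvalue of $\tilde C$ lies in $[-2\pi,2\pi]$, and for $\mu\in(\pi,2\pi]$ one has $|\mu-2\pi|=2\pi-\mu<\mu$ (and symmetrically for $\mu\in[-2\pi,-\pi]$). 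Hence each eigenvalue of $C$ has modulus at most that of the matched eigenvalue of $\tilde C$, and after sorting we obtain $s_m(C)\le s_m(\tilde C)$ for every $m$. Combining the two displays gives
\[ \sum_{\ell=1}^p |c_{m_\ell}| = \sum_{\ell=1}^p s_{m_\ell}(C) \le \sum_{\ell=1}^p s_{m_\ell}(\tilde C) \le \sum_{\ell=1}^p \bigl(|a_{j_\ell}| + |b_{k_\ell}|\bigr), \]
which is precisely~\eqref{E:Lidskii1}.

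I would flag one subtlety that prevents quoting Proposition~\ref{Prop:additional} verbatim: the map $\gamma \mapsto \sum_\ell \gamma_{m_\ell}$ is a selection sum with generally non-consecutive indices, and such a sum is \emph{not} Schur-convex in the weak sub-majorization sense (already $\gamma\mapsto\gamma_2$ fails). What rescues the argument is that the comparison between $C$ and $\tilde C$ obtained above is the stronger \emph{entrywise} domination $s_m(C)\le s_m(\tilde C)$, under which any coordinatewise-monotone function---in particular the Lidskii selection sum---is automatically isotone. Thus the corollary is delivered by the technique of Proposition~\ref{Prop:additional}, specialised to an inequality whose first-argument dependence is monotone rather than Schur-convex.
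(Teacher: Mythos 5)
Your proposal is correct, and it is essentially the paper's route (Theorem~\ref{Thrm:Thompson}, a Lidskii-type inequality, then the wrap-around from $\tilde C$ to $C$) — but you have unpacked it where the paper compresses it to one line, and in doing so you have put your finger on a real soft spot. The paper simply declares \eqref{E:Lidskii1} a ``direct consequence'' of Proposition~\ref{Prop:additional} and \eqref{E:Lidskii_Hermitian_inequality1}, which requires reading the selection sum $\gamma \mapsto \sum_\ell \gamma_{j_\ell+k_\ell-\ell}$ as Schur-convex in the paper's sense; as you observe, it is not monotone under weak sub-majorization once the indices are non-consecutive (already $\gamma\mapsto\gamma_2$ fails on $(1,1)\prec_w(2,0)$), so Proposition~\ref{Prop:additional} cannot be quoted verbatim. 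Your repair — noting that the passage from $\sing{\tilde C}$ to $\sing{C}$ actually gives the stronger \emph{entrywise} domination $s_m(C)\le s_m(\tilde C)$ (each wrapped eigenvalue has modulus no larger than the original, and sorting preserves termwise domination), under which any coordinatewise-monotone selection sum is isotone — is exactly what is needed, and your verification that $\|\tilde C\|\le 2\pi$ keeps every eigenvalue within one $2\pi$-shift of the principal branch is sound. You are also right to replace the eigenvalue inequality \eqref{E:Lidskii_Hermitian_inequality1} that the paper cites by its singular-value (Thompson--Freede) analogue, since Proposition~\ref{Prop:additional} is phrased in terms of $\sing{\cdot}$; the paper elides this substitution as well. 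In short: same skeleton as the paper, but your version closes two small gaps that the paper's two-line proof leaves open.
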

\begin{proof}
 Eq.~\eqref{E:Lidskii1} is the direct consequences of
 Proposition~\ref{Prop:additional} and the inequality
 \begin{equation}
  \sum_{\ell = 1}^p \eigset{A+B}{j_\ell+k_\ell-\ell} \le
  \sum_{\ell = 1}^p \left[ \eigset{A}{j_\ell} + \eigset{B}{k_\ell} \right]
  \label{E:Lidskii_Hermitian_inequality1}
 \end{equation}
 reported in Ref.~\cite{Talt}.  Here $\eigset{A}{j}$ denotes the $j$th
 eigenvalue of the Hermitian matrix $A$ arranged in descending order.
\end{proof}

\begin{remark}
 Actually, Eq.~\eqref{E:Lidskii_Hermitian_inequality1} belongs to a class of
 matrix inequalities in the form
 \begin{equation}
  \sum_{k\in\sK} \eigset{A+B}{k} \le \sum_{i\in\sI} \eigset{A}{i} +
  \sum_{j\in\sJ} \eigset{B}{j} ,
  \label{E:Lidskii_Hermitian_type_inequality}
 \end{equation}
 where $A, B$ are $n\times n$ Hermitian matrices and $\sI,\sJ,\sK$ are subsets
 of $\{1,2,\dots,n \}$ with equal cardinality.  This class of matrix
 inequalities is sometimes called the Lidskii-type inequalities.  Thus,
 Proposition~\ref{Prop:additional} implies that every Lidskii-type inequality
 for Hermitian matrix induces a corresponding inequality for unitary matrix.
 \label{Rem:induced_Lidskii-type_inequalities}
\end{remark}

\section{Relation To The Horn's Problem} \label{Sec:Horn_Problem}

 In fact, Lidskii-type inequalities are closely related to the Horn's problem
 in matrix theory.  Horn~\cite{Horn} conjectured that eigenvalues of the $n
 \times n$ Hermitian matrices $A$, $B$ and $A+B$ are completely characterized
 by inequalities in the form Eq.~\eqref{E:Lidskii_Hermitian_type_inequality}
 and the equality
\begin{equation}
 \sum_{j=1}^n \eigset{A+B}{j} = \sum_{j=1}^n \left[ \eigset{A}{j} +
 \eigset{B}{j} \right] .
 \label{E:det_constraint}
\end{equation}
 (That is to say, he believed that eigenvalues of $A$, $B$ and $A+B$ obey
 Eq.~\eqref{E:det_constraint} and certain Lidskii-type inequalities.
 Furthermore, given three decreasing sequences of real numbers
 $( a_j )_{j=1}^n$, $( b_j)_{j=1}^n$ and $( c_j )_{j=1}^n$ satisfying
 $\sum_{j=1}^n c_j = \sum_{j=1}^n \left( a_j + b_j \right)$ and the
 corresponding Lidskii-like inequalities in the form $\sum_{k\in\sK} c_k \le
 \sum_{i\in\sI} a_i + \sum_{j\in\sJ} b_j$, then there exist Hermitian matrices
 $A$, $B$ and $A+B$ whose eigenvalues equal $a_j$'s, $b_j$'s and $c_j$'s,
 respectively.)  Horn also wrote down a highly inefficient inductive algorithm
 to find the subsets $\sI$, $\sJ$ and $\sK$~\cite{Horn}.  The Horn's problem
 was proven by combined works of Klyashko~\cite{Klyachko} and Knutson and
 Tao~\cite{KT}.  Besides, the existence of a minimal set of Lidskii-type
 inequalities for the Horn's problem was also
 shown~\cite{Klyachko,KT,Belkale,KTW}.  In this regard,
 Remark~\ref{Rem:induced_Lidskii-type_inequalities} can be restated as follow:
 each of the minimal set of Lidskii-type inequalities for the Horn's problem
 induces an inequality for the eigenvalues of unitary matrices $X$, $Y$ and
 $X Y$.

 Naturally, one asks if these corresponding inequalities completely
 characterizes the eigenvalues of the product of unitary matrices.  This
 problem, which is sometimes called the multiplicative version of the Horn's
 problem, was solved by the combined works of Agnihorti and Woodward~\cite{AW}
 and Belkale~\cite{Belkale,Belkale2} by means of quantum Schubert calculus.
 Phrased in the content of our current discussion, they proved the following.
 Let $e^{2\pi i\alpha_j}$'s, $e^{2\pi i\beta_j}$'s and $e^{2\pi i\gamma_j}$'s be
 eigenvalues of the $n\times n$ special unitary matrices $X$, $Y$ and $Z$,
 respectively.  Surely, one may constrain the phases of the eigenvalues by
 $\sum_{j=1}^n \alpha_j = 0$ and $\alpha_1 \ge \alpha_2 \ge \dots \ge
 \alpha_n > \alpha_1 - 1$.  And $\beta_j$'s and $\gamma_j$'s are similarly
 constrained.  Then, the eigenvalues of $X$, $Y$ and $Z$ satisfying $X Y Z = I$
 are completely characterized in the sense of the Horn's problem by
 inequalities in the form
\begin{equation}
 \sum_{i\in \tilde{\sI}} \alpha_i + \sum_{j\in \tilde{\sJ}} \beta_j +
 \sum_{k\in \tilde{\sK}} \gamma_k \le d
 \label{E:multiplicative_Horn_constraints}
\end{equation}
 for some $d(\tilde{\sI},\tilde{\sJ},\tilde{\sK})\in \bN$ known as the
 Gromov-Witten invariant, where the subsets $\tilde{\sI}$, $\tilde{\sJ}$ and
 $\tilde{\sK}$ of $\{ 1,2,\dots , n\}$ are of the same cardinality.  Similar to
 the Horn's problem, only a highly inefficient recursive algorithm is known to
 date to find these inequalities.  Thus, it is instructive to see how to deduce
 our induced inequalities from those completely characterizing the
 multiplicative version of the Horn's problem as this problem seems to be
 non-trivial.  In fact, a major difficulty of this approach is the different
 ways to order the eigenvalues $e^{ia_j}$'s --- ours are ordered by the values
 of $|a_j|$'s while those arising from the multiplicative version of the Horn's
 problem are ordered by the values of $a_j$'s.
 Note that in applications of matrix inequalities to practical problems such as
 numerical analysis and perturbation theory, it is often the case that one can
 deduce the useful results using the basic Lidskii-type inequalities in the
 form of Eqs.~\eqref{E:Lidskii1} or~\eqref{E:Lidskii_Hermitian_inequality1},
 and rarely would one use the full generalizations in
 Eq.~\eqref{E:Lidskii_Hermitian_type_inequality}.  In fact, specializing the
 general results in (14) to deduce well known matrix inequalities may actually
 be quite involved.  For example, see Theorem~3.4 and the discussion after it
 in Ref.~\cite{LP03}.  In that paper, we obtained our main results using
 Thompson's theorem efficiently. As mentioned before, it will be instructive to
 use the general inequalities of the multiplicative version of Horn's problem
 to deduce our results, but it may not be easy and not very practical.

\begin{acknowledgments}
 We like to thank K.-Y.\ Lee for pointing out a mistake in our draft.
 H.F.C.\ is supported in part by the RGC grant HKU~700709P of the HKSAR
 Government.  Research of C.K.L.\ is supported by a USA NSF grant, a HK RGC
 grant, and the 2011 Shanxi 100 Talent Program.  He is an honorary professor of
 University of Hong Kong, Taiyuan University of Technology, and Shanghai
 University.  Research of Y.T.P.\ is supported by a USA NSF grant and a HK RGC
 grant.  Research of N.S.S.\ is supported by a HK RGC grant.
\end{acknowledgments}

\bibliography{qc52.3}
\end{document}